\date{\today}
\newcommand{\Qu}[1]{[#1]}
\newtheorem{thm}{Theorem}
\newtheorem{prp}{Proposition}
\begin{document}
	
	\preprint{APS/123-QED}
	
	\title{Quotient Space Quantum Codes}
	
	\author{Jing-Lei Xia~\href{https://orcid.org/0000-0001-5539-0198}	}%
	\email[Corresponding author.\\]{JingLei\_Xia@163.com}
	
	\affiliation{
		China Coast Guard Academy,~Ningbo,~Zhejiang,~315801,~China
	}%

	\date{\today}

	\begin{abstract}
		Additive codes and some nonadditive codes use the single and multiple invariant subspaces of the stabilizer G, respectively, to construct quantum codes, so the selection of the invariant subspaces is a key problem. In this paper, I provide the necessary and sufficient conditions for this problem and,  establish the quotient space codes to construct quantum codes. These new codes unify additive codes and codeword stabilized codes and can transmit classical codewords. Actually, I give an alternative approach to constructing union stabilizer codes, which is different from that of Markus Grassl and Martin Roetteler, and which is easier to deal with degenerate codes. I also present new bounds for quantum codes and provide a simple proof of the quantum Singleton bound. The quotient space approach provides a concise and clear mathematical framework for the study of quantum error-correcting codes.
	\end{abstract}
	\maketitle

	\textit{Introduction}.\textit{---}	Quantum error-correcting codes (QECCs) \cite{terhal2015quantum} are a coding scheme used to protect quantum information from noise and errors and are a crucial problem for the realization of quantum computing and quantum communication \cite{BCHS98,nielsen2010quantum}. Since Shor \cite{shor1995scheme} constructed the first code, a large number of coding schemes have been proposed, among which stabilizer codes or additive codes \cite{gottesman1997stabilizer,calderbank1997quantum,CRSS98} are the most representative class of codes. They use a $2^k$-dimensional invariant subspace of an additive subgroup $G$ (stabilizer) of the quantum error operator group $E$ to construct codes. Additive codes are usually denoted as $[[n,k,d]],$ indicating that $n$ physical qubits are encoded as $k$ logical qubits and can correct $\lfloor\frac{d-1}{2}\rfloor$-qubit errors, where $d$ is the minimum distance. 
	
	Another type of code is nonadditive codes, first proposed by Rains et al.\cite{rains1997nonadditive}, some of which use $L$ one-dimensional invariant subspaces of the stabilizer to construct codes, denoted as $((n,L,d)),$ representing $L$-dimensional subspaces in an $n$-dimensional Hilbert space, that can correct $\lfloor\frac{d-1}{2}\rfloor$ -qubit errors.
	The references \cite{smolin2007simple,yu2008nonadditive,AVCA08}  used different methods to construct nonadditive codes that are better than additive codes. Reference \cite{cross2008codeword} created codeword stabilized codes (CWS) that unify additive and some good nonadditive codes. These methods are elegant but seem too esoteric and lack a language or concepts to succinctly express the problem, such as how to construct  codes for $d\geq 4.$
	
	This paper utilizes normed quotient spaces to establish quotient space codes. If a quotient space code $(n-k, L, d)$ that satisfies the measurement conditions \eqref{eq:measuring Ed} exists, then there are $L$ additive codes of dimension $2^k$ that together form a quotient space quantum code $((n, 2^kL, d)).$ When $L=1,$ an additive code is obtained, and when $k=0,$ a CWS code is obtained. The circuit design and decoding scheme of this code can be improved based on the additive code. It is observed that although CWS codes can include additive codes, these two code families have fundamentally different error correction methods.~I also provide new bounds-\textit{measurement bounds},~which reflect the unique properties of quantum codes compared to classical codes and are applicable to both degenerate and nondegenerate codes, and give a novel  perspective on the open question of whether any code does not violate the quantum Hamming bound\cite{DAL22}.
	The choice of coset representatives and norms greatly simplifies the problem description, as shown in the proof of the quantum Singleton bound. I present four methods for constructing the code $((8, 8, 3))$ and provide an example of a degenerate CWS code $((12, 2, 5)).$
	
	Coincidentally, when I finished the main work of this paper, I realized that 
	Markus Grassl and Thomas Beth had already considered the construction of codes with multiple subspaces in 1997~\cite{grassl1997notenonadditivequantumcodes}, and Markus Grassl and Martin Roetteler subsequently developed and refined the theory of \textit{ union stabilizer} (USt) codes in 2008~\cite{GR08} and 2013~\cite{GR2013}.~But our approaches are fundamentally different.~The USt construction first selects the cosets and then calculates the distance d, and it is known in the framework of QSQC that such an order can easily lead to a small or even zero distance in the case of degenerate codes, hence they were not aware of any degenerate truly nonadditive code whose minimum distance is
	strictly larger than $d_{min}(\Omega_{*})$ where  $ \Omega_{*} $ is  classical union normalizer code\cite{GR2013}. At the end of the paper, I give the detailed relationship between these two methods and give two degenerate codes~$ ((9,64,2))$ and $((7,16,2))$ that satisfy a strict inequality in Eq. (10.6) in \cite{GR2013}.

	\textit{Preparation.}\textit{---}First, I introduce the basics and some notations of quantum codes. The quantum error group $E_n$ is written as $E_n=\{i^qX(a)Z(b)\mid 0\leq q\leq 3,~a,~b\in \mathbb{F}_2^n\},$ and in situations where there is no confusion it is also written as $E.$ We define the mapping
	$\label{equ:hom}
	\varphi:E_n\rightarrow \mathbb{F}_2^{2n},~e=i^q X(a)Z(b)\longmapsto\varphi(e)=(a|b),
	$ which induces a group isomorphism  $
	\bar{\varphi}:\overline{E}_n=E_n/\{\pm 1,\pm i\}\rightarrow V=\mathbb{F}_2^{2n},~\bar{\varphi}(\bar{e})=(a|b),
	$
	thus allowing quantum errors to be represented by vectors in $ \mathbb{F}_2^{2n},$ and we may denote $\bar{e}=\varphi(e).$ Let $ w_Q(\bar{e}) $ be the quantum weight of $ e ,$ $ w_H(\bar{e}) $ be the Hamming weight of $ e ,$ $ E(d) $ be the quantum errors with weight not exceeding $ d ,$ and $\overline{E}(d)$ be its image. The symplectic inner product of vectors in $ \mathbb{F}_2^{2n} $ is defined as $(c_1,c_2)_s=a·b'+a'·b,~c_1=(a|b),~c_2=(a'|b')\in  \mathbb{F}^{2n}_2.~C$ is a subspace of $ \mathbb{F}^{2n}_2 ,$ and $  C^{\bot}_s=\{c_s\mid(c_s,c)_s=0,~\forall ~c\in C\} ,$ is the symplectic orthogonal space of $ C .$ $G$ is an additive subgroup of $ E $ and $\overline{G}=C,~G^{\bot}_s=\{e\mid e\in E_n,~\bar{e}\in C^{\bot}_s\},d_m=min\{w_Q(c)\mid c\in(C^{\bot}_s)\backslash C\},~C(d)=\overline{E}(d)\bigcap C,~ C(d)^{\bot}_s $ is the symplectic orthogonal space of $ C(d) ,$ and it is known that $C^{\bot}_s\subset C(d)^{\bot}_s.$ A characteristic of a finite commutative group $ G $ is a group homomorphism from $ G $ to the nonzero complex multiplication group $ \mathbb{C}^{*} $
	\begin{equation*}
		\lambda:G\rightarrow \mathbb{C}^{*},~\lambda(gg')=\lambda(g)\lambda(g')~(g,~g'\in G) .	
	\end{equation*} 
	The set of all characteristics of $ G $ forms a group $\widehat{G},$ and the characteristic group of $ C $ is denoted as $\widehat{C}.$
	
	\textit{Characterization of Invariant Subspaces.} \textit{---} To express the problem in concise mathematical form, it is necessary to characterize the invariant subspaces of $G$ and the action of quantum errors on them. Due to the isomorphism between a  finite abelian group $G$ and its character group $\widehat{G}$ \cite{jacobson2009basic}, and the natural isomorphism between the additive character group $\widehat{C}$ of $C$ and the additive quotient group $\mathbb{F}_2^{2n}/C^{\bot}_s$ \cite{feng2010quantum}, it is easy to prove using the homomorphism theorem that
	\begin{equation}\label{key}
		\widehat{G}\cong G\cong C \cong \widehat{C}\cong\dfrac{\mathbb{F}^{2n}_2}{C^{\bot}_s}\cong \dfrac{E_n}{G^{\bot}_s} . 
	\end{equation}
	Let $I=\frac{\mathbb{F}^{2n}_2}{C^{\bot}_s},$ and $\Qu{i}$ be a coset representative for $i,$ therefore, we have
	\begin{equation}\widehat{G}=\widehat{C}=\{\lambda_{\Qu{i}} \mid\Qu{i}\in I \}.
	\end{equation}
	Although the notation for these two types of character groups is the same, we can distinguish them based on the objects they act on. In particular,
	$\lambda_{\Qu{i}}(g)=\lambda_{i}(g)=(-1)^{(i,\bar{g})_s}~(g\in G, ~i \in \mathbb{F}^{2n}_2 ),~	\lambda_{\Qu{i}}(c)=\lambda_{i}(c)=(-1)^{(i,c)_s}~(c \in C ),$ and we also denote $\lambda_e(g)=\lambda_{\bar{e}}(\bar{g})$ as the character of $G.$ It can be observed that $\lambda_{i}=\lambda_{j} \Leftrightarrow \Qu{i}=\Qu{j},$ and $\lambda_{e_1}=\lambda_{e_2} \Leftrightarrow \Qu{\overline{e_{1}}}=\Qu{{\overline{e_{2}}}}.$ Since an $n$-dimensional Hilbert space has an orthogonal direct sum decomposition \cite{feng2010quantum}, 
	$\mathbb{C}^{2^n}=\mathop{\oplus}_{\lambda\in\widehat{G}}Q(\lambda),$ where each $Q(\lambda)=\{v\in \mathbb{C}^{2^n}\mid~\forall~ g\in G,~g(v)=\lambda(g)v\}$ is an invariant subspace of $G$ and can be used to construct additive codes. Let $Q(\lambda_{\Qu{i}}) =Q(\Qu{i})=Q(i),$ then $Q(i)=Q(j)\Leftrightarrow \Qu{i}=\Qu{j},$ therefore, the above decomposition can be written as
	\begin{equation}
		\label{eq:HS sum}
		\mathbb{C}^{2^n}=\mathop{\oplus}_{\Qu{i}\in I }Q(\Qu{i}).
	\end{equation}
	Let $Q(I)=\{Q(\Qu{i})\mid \Qu{i}\in I\},$ the action of quantum errors on the invariant subspaces was described by 
	\begin{thm}
		\label{thm:error action}
		The action of  $E$ on $Q(I)$ corresponds to the additive  of I on itself, and $G^{\bot}_s$ corresponds to $\Qu{\mathbf{0}}.$
	\end{thm}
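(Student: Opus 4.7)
The plan is to leverage the (anti)commutation relations of the Pauli group together with the eigenvalue definition of $Q(\Qu{i})$. First I would invoke the Pauli commutation rule: for any $g\in G$ and $e\in E_n$ one has $ge=(-1)^{(\bar{g},\bar{e})_s}eg$ up to a global scalar that acts trivially on subspaces. Fixing $v\in Q(\Qu{i})$, so that $g(v)=(-1)^{(i,\bar{g})_s}v$, I would then compute
\begin{equation*}
g\bigl(e(v)\bigr)=(-1)^{(\bar{g},\bar{e})_s}e\bigl(g(v)\bigr)=(-1)^{(i+\bar{e},\bar{g})_s}e(v)=\lambda_{\Qu{i+\bar{e}}}(g)\,e(v),
\end{equation*}
using symmetry and $\mathbb{F}_2$-bilinearity of $(\cdot,\cdot)_s$. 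This shows $e\bigl(Q(\Qu{i})\bigr)\subseteq Q(\Qu{i+\bar{e}})$, and applying the same argument to $e^{-1}$ yields equality.

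The next step is to interpret this assignment as the additive action of $I$ on itself. The map $\Qu{i}\mapsto\Qu{i+\bar{e}}$ is translation by $\Qu{\bar{e}}\in I$, and it depends only on the coset $\Qu{\bar{e}}$: two errors whose images in $\mathbb{F}_2^{2n}$ differ by an element of $C^{\bot}_s$ produce the same coset $\Qu{i+\bar{e}}$ and therefore the same permutation of $Q(I)$, by the criterion $Q(i)=Q(j)\Leftrightarrow\Qu{i}=\Qu{j}$ recorded earlier. Hence the $E$-action on $Q(I)$ factors through translation of $I$ on itself, as promised.

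Finally, I would identify the kernel of this action. An error $e$ stabilizes every $Q(\Qu{i})$ iff translation by $\Qu{\bar{e}}$ is the identity on $I$, iff $\bar{e}\in C^{\bot}_s$, iff $e\in G^{\bot}_s$; this is precisely the identity coset $\Qu{\mathbf{0}}\in I$. I do not anticipate a genuine obstacle here; the only subtleties worth flagging are the bookkeeping of global phases $i^q,\pm 1$ in $E_n$ (which act trivially on subspaces and so cancel out) and the verification that the action descends cleanly from $E$ to $I\cong E_n/G^{\bot}_s$, both of which are handled by the isomorphism chain in \eqref{key}.
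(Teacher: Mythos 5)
Your proposal is correct and follows essentially the same route as the paper: the identical commutation computation $g(e|v\rangle)=(-1)^{(\bar{e},\bar{g})_s}e(g|v\rangle)=\lambda_{\Qu{\bar{e}}+\Qu{i}}(g)\,e|v\rangle$ showing $eQ(\Qu{i})\subseteq Q(\Qu{\bar{e}}+\Qu{i}).$ You are in fact slightly more careful than the paper, which leaves implicit the upgrade from inclusion to equality (via $e^{-1}$), the well-definedness of the action on cosets, and the explicit identification of the kernel with $G^{\bot}_s\leftrightarrow\Qu{\mathbf{0}}.$
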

	\begin{proof}

		For any $|v\rangle \in Q(\Qu{i}),$ and $g\in G,$ we have
		\begin{align*}
			g(e|v\rangle)&=(ge)|v\rangle=(-1)^{(\bar{e},\bar{g})_s}(eg)|v\rangle=(-1)^{(\bar{e},\bar{g})_s}e(g|v\rangle)\\
			&=\lambda_{e}(g)e\lambda_{\Qu{i}} (g)|v\rangle=\lambda_{\Qu{\overline{e}}}(g)\lambda_{\Qu{i}}(g)e|v\rangle\\
			&=\lambda_{\Qu{\overline{e}}+\Qu{i}}(g)e|v\rangle,
		\end{align*}
		where $\lambda_{\Qu{\overline{e}}+\Qu{i}}$ denotes the character associated with the coset $\Qu{\overline{e}}+\Qu{i}.$ Therefore, $e|v\rangle \in Q(\Qu{\overline{e}}+\Qu{i}),$ which means that the action $eQ(\Qu{i})=Q(\Qu{j})$ corresponds to $\Qu{\overline{e}}+\Qu{i}=\Qu{j},$ and  vice versa.
	\end{proof}
	\textit{Quotient Space Codes. }\textit{---}So, the problem of selecting invariant subspaces was transformed into the problem of selecting cosets, which naturally leads to the concept of the quotient space codes. To establish a quotient space code, we first define the distance and norm of the quotient space. For any $\Qu{x},~\Qu{y}\in I,$ we define the distance as 
	\begin{equation}
		d({\Qu{x},\Qu{y}})=\inf_{\tilde{x}\in\Qu{x},\tilde{y}\in\Qu{y}}d(\tilde{x},\tilde{y}),
	\end{equation}
	and define the norm as
	\begin{equation}
		\|\Qu{x}\|=\mathop {\inf}\limits_{ \tilde{x}\in \Qu{x}}\|\tilde{x}\|.\label{eq:min-normal}
	\end{equation}
	When the distance satisfies the translation invariance property, i.e., $d(x+z,y+z)=d(x,y),$ the above definitions satisfy the axioms of a metric and a norm, respectively \cite{rudin1991functional}. To ensure that $\|x\|=w_Q(x)$ becomes a norm, I define $|\alpha |=1$ if $\alpha \neq 0$ ($\alpha \in \mathbb{F}_q$). This allows the quantum weight to satisfy the homogeneity axiom $\|\alpha x\| =|\alpha|\|x\|,$ which is consistent with the definition of a norm. we call this the \textit{quotient minimum norm}. Similarly, the same treatment can be applied to the Hamming weight. Using $d(x,y)=\|x-y\|,$ the distance and norm can be mutually converted, and thus $I$ becomes a metric space and a normed space.
	
	To extend the conclusions of this paper to quantum codes over $q$-ary fields\cite{ashikhmin2001nonbinary,ketkar2006nonbinary}, here is  a general definition of a quotient space code.
	
	Let $V=\mathbb{F}_q^{2n},$  $H$ be a subspace of $V$ and $W=V/H$ be the normed (or metric) quotient space of $V$ with respect to $H.$ If $dim W=n-k,$ then any non-empty subset $\Omega$ of $W$ is called a quotient space code ($ \mathbf{QSC} $) (to avoid confusion with \emph{quantum spherical codes}, it can also be called a \emph{classical QSC}), actually, it is a special coset code \cite{Cosetcode1,Cosetcode2}). Each element (coset) in $\Omega$ is called a codeword. If $|\Omega|=L,$ and $d$ is the minimum distance between cosets in $\Omega,$ then $\Omega$ is called an $(n-k,L,d)_q$ QSC, where $n-k$ represents the code length and $L$ represents the number of codewords. Let $l=\log_qL$ be the number of information bits in the code $\Omega,$ $\Omega$ can also be denoted as  $[n-k,l,d]_q.$	
	
	Similar to classical error-correcting codes \cite{macwilliams1978theory}, a QSC with a distance of $d$ can detect quotient errors up to $d-1,$ and can correct quotient errors up to $\lfloor\frac{d-1}{2} \rfloor.$ 
	
	The distance between $ \Omega $ as QSC and as classical code is related as follows
	\begin{prp}\label{Prp:1}
		Let $\Omega[n-t,k,d]$ be a QSC  ,and $\Omega_{*}$ is the set of elements in the cosets in $ \Omega ,$ if $\delta = d(\Omega_{*})$, then $d \geq \delta$.
	\end{prp}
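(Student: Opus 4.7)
The plan is to unpack the two distance notions and observe that the infimum defining the quotient distance is already taken over pairs of distinct elements of $\Omega_{*}$, so it cannot drop below the classical minimum distance $\delta$ of $\Omega_{*}$.

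First I would fix two distinct cosets $\Qu{x},\Qu{y}\in\Omega$ and any representatives $\tilde{x}\in\Qu{x}$, $\tilde{y}\in\Qu{y}$. The key observation is that $\tilde{x}\neq\tilde{y}$: otherwise $\tilde{x}=\tilde{y}$ would force $\Qu{x}=\Qu{y}$, contradicting the choice of distinct cosets. Since both $\tilde{x}$ and $\tilde{y}$ lie in $\Omega_{*}$ by construction and they are distinct elements of the classical code $\Omega_{*}$, we have $d(\tilde{x},\tilde{y})\geq\delta$.

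Next I would take the infimum over all such representatives. By the definition
\[
d(\Qu{x},\Qu{y})=\inf_{\tilde{x}\in\Qu{x},\tilde{y}\in\Qu{y}}d(\tilde{x},\tilde{y}),
\]
the bound $d(\tilde{x},\tilde{y})\geq\delta$ is uniform over the index set, so passing to the infimum preserves the inequality and yields $d(\Qu{x},\Qu{y})\geq\delta$. Finally, minimising over all ordered pairs of distinct cosets in $\Omega$ gives $d\geq\delta$, which is the desired statement.

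I do not anticipate a serious obstacle here: the whole content of the proposition is that the infimum in the quotient distance never ranges over coinciding pairs whenever the two cosets are distinct, and this is immediate from the definition of a coset. The only point that deserves an explicit sentence is the disjointness of distinct cosets, so that distinct cosets indeed contribute only distinct element pairs to the classical distance computation inside $\Omega_{*}$.
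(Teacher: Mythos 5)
Your proposal is correct and follows essentially the same route as the paper's own proof: both arguments reduce the quotient distance to a distance between actual representatives, note that representatives of distinct cosets are distinct elements of $\Omega_{*}$, and conclude $d \geq \delta$ from the classical minimum distance. Your version is slightly more explicit about the disjointness of distinct cosets and about passing the uniform lower bound through the infimum, but the mathematical content is identical.
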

	
	\begin{proof}
		Let $ d=d(\Omega)=\inf d(\Qu{i},\Qu{j}) =d(i,j), $ where $i,j \in \Omega$. Since $i,j \in \Omega_{*}$, so $d = d(i,j) \geq \delta=d(\Omega_{*})$.
	\end{proof}

	\textit{New  Codes.} \textit{---}The preliminary work is complete, and I now present the new code.
	\begin{thm}
		\label{thm:QSQC} Let $ C $ be a symplectic self-orthogonal subspace in $ \mathbb{F}_2^{2n} ,$ where $ dim C = n-k $ with $ 0\leq k\leq n ,$~let $ I = \mathbb{F}_2^{2n}/{C_s^{\bot}} ,$ and  $ \Omega $ be a QSC in $ I $ with parameters $ (n-k, L, d) ,$ where $ d\leq d_m $ and $ \Omega $ belongs to the same coset in the quotient space $ \mathbb{F}_2^{2n}/C(d-1)_s^{\bot} .$Then, there exists a quantum code $ Q(\Omega) ((n, 2^k L, d)) ,$ where
		\begin{equation}
			Q(\Omega) = \bigoplus_{\Qu{i}\in \Omega} Q(\Qu{i}).
		\end{equation}
	\end{thm}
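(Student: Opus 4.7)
The plan is to verify the Knill--Laflamme detection condition for $Q(\Omega)$ against every $e\in E$ with $w_Q(e)\le d-1$. I fix an orthonormal basis $\{|v_{\Qu{i},\alpha}\rangle\}$ of $Q(\Omega)$, where $\Qu{i}$ ranges over $\Omega$ and $\alpha$ indexes an orthonormal basis of the invariant subspace $Q(\Qu{i})$. The dimension claim is then immediate from \eqref{eq:HS sum}: since $|\overline{G}|=|C|=2^{n-k}$, each summand $Q(\Qu{i})$ has dimension $2^{n}/2^{n-k}=2^{k}$, hence $\dim Q(\Omega)=2^{k}L$ as required.

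Fix $e$ with $w_Q(e)\le d-1$ and examine a generic matrix element $\langle v_{\Qu{i_a},\alpha}|e|v_{\Qu{i_b},\beta}\rangle$. By Theorem~\ref{thm:error action}, $e|v_{\Qu{i_b},\beta}\rangle\in Q(\Qu{i_b+\bar{e}})$, so the matrix element vanishes unless $\Qu{i_a-i_b}=\Qu{\bar{e}}$ in $I$. I split into two cases according to whether $\Qu{\bar{e}}=\Qu{\mathbf{0}}$ or not. If $\Qu{\bar{e}}\neq\Qu{\mathbf{0}}$, the nonvanishing condition $\Qu{i_a-i_b}=\Qu{\bar{e}}$ would force $\Qu{i_a}\neq\Qu{i_b}$ at quotient distance at most $w_Q(\bar{e})\le d-1$, contradicting the minimum distance $d$ of $\Omega$; hence the matrix element is zero.

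If $\Qu{\bar{e}}=\Qu{\mathbf{0}}$, i.e.\ $\bar{e}\in C_s^{\bot}$, the assumption $d\le d_m$ combined with $w_Q(\bar{e})\le d-1<d_m$ forces $\bar{e}\in C$, so $e\in G$ up to a phase. Then $e$ acts on $Q(\Qu{i_b})$ as the scalar $\lambda_{\Qu{i_b}}(e)=(-1)^{(i_b,\bar{e})_s}$, and the matrix element equals $\lambda_{\Qu{i_b}}(e)\,\delta_{\Qu{i_a},\Qu{i_b}}\delta_{\alpha\beta}$. The ratio $\lambda_{\Qu{i_a}}(e)/\lambda_{\Qu{i_b}}(e)=(-1)^{(i_a-i_b,\bar{e})_s}$ equals $1$ for every $\bar{e}\in C(d-1)$ precisely when $i_a-i_b\in C(d-1)_s^{\bot}$, which is exactly the hypothesis that $\Omega$ lies in a single coset of $\mathbb{F}_2^{2n}/C(d-1)_s^{\bot}$. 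Therefore $\lambda_{\Qu{i}}(e)$ is constant on $\Omega$, and combining the two cases gives $P\,e\,P=\mu(e)\,P$ for all $e$ with $w_Q(e)\le d-1$, where $P$ is the projector onto $Q(\Omega)$.

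The main subtlety is the complementary role of the two coset hypotheses: $d\le d_m$ is what rules out errors with $\bar{e}\in C_s^{\bot}\setminus C$, which would preserve each $Q(\Qu{i})$ but act there as a nonscalar operator and hence be genuinely undetectable; the single-coset condition modulo $C(d-1)_s^{\bot}$ is then what makes the residual $G$-characters agree across $\Omega$, allowing the distinct additive codespaces $Q(\Qu{i})$ to merge coherently into a quantum codespace rather than a mere classical union. Once these two points are isolated, the verification is a routine application of Theorem~\ref{thm:error action}.
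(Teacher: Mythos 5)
Your proposal is correct and follows essentially the same route as the paper's own proof: the same dimension count via Theorem~\ref{thm:error action}, and the same verification of the Knill--Laflamme condition using $d\le d_m$ to force $\bar{e}\in C$ when $\Qu{\bar{e}}=\Qu{\mathbf{0}}$, the minimum distance of $\Omega$ to kill the off-diagonal terms when $\Qu{\bar{e}}\neq\Qu{\mathbf{0}}$, and the single-coset condition modulo $C(d-1)_s^{\bot}$ to equalize the characters $\lambda_{\Qu{i}}$ across $\Omega$. Organizing the case split by $\Qu{\bar{e}}$ rather than by $\delta_{ij}$ is only a cosmetic difference.
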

	\begin{proof}
		
		Because $C \subseteq C_s^{\bot},$ we can lift $C$ to $E_n$ and obtain a stabilizer  $G$ such that $\overline{G} = C.$
		
		First, let us prove that $dim(Q(\Omega)) = 2^k L.$ From equation \eqref{eq:HS sum}, we know that $\mathbb{C}^{2^n} = \bigoplus_{\Qu{i}\in I} Q(\Qu{i}).$ Since the action of $I$ on $I$ is transitive, according to Theorem \ref{thm:error action}, each $Q(\Qu{i})$ is isomorphic and has the same dimension. Hence, $dim(Q(\Qu{i})) = \frac{2^n}{|I|} = \frac{2^n}{2^{n-k}} = 2^k,$ implying $dim(Q(\Omega)) = dim(Q(\Qu{i}))|\Omega| = 2^k L.$
		
		Next, I prove that the minimum distance of $Q(\Omega)$ is $d.$ Let basis vectors $|v_i\rangle, |v_j\rangle \in Q(\Omega),$ where $|v_i\rangle \in Q(\Qu{i})$ and $|v_j\rangle \in Q(\Qu{j}).$ Knill and Laflamme \cite{knill1997theory} and Bennett et al \cite{bennett1996mixed} have shown that a quantum code has minimum distance d  iff
		\begin{equation}
			\langle v_i |e|v_j\rangle = f(e)\delta_{ij},
		\end{equation}
		where  $e \in E_n(d-1)$, $f(e)$ depends only on $e.$ I prove this condition separately for the cases where $\delta_{ij} = 0$ and $\delta_{ij} = 1.$
		
		When $\delta_{ij} = 0,$ the necessary(or distinguishability) condition is given by
		\begin{equation}
			\langle v_i |e|v_j\rangle = 0.
		\end{equation}
		If $\Qu{\bar{e}} = \Qu{\mathbf{0}},$ then $e \in C$ due to $d \leq d_m.$ Hence, $\exists~ g \in G$ such that $e = \gamma^{*}g$ (where $|\gamma| = 1$), and we have $\langle v_i|e|v_j \rangle = \gamma \langle v_i |g|v_j\rangle = \gamma \lambda_{\Qu{j}}(\bar{g}) \langle v_i |v_j\rangle = 0.$ If $\Qu{\bar{e}} \neq \Qu{\mathbf{0}},$ then $0 < \|\Qu{\bar{e}}\| \leq w_Q(\Qu{\bar{e}}) < d.$ Since $\Qu{i}$ and $\Qu{j}$ are codewords in $\Omega$ with distance $d,$ we have
		\begin{equation}\label{eq:necessary}
			\Qu{i} \neq \Qu{\bar{e}} + \Qu{j}.
		\end{equation}
		
		Based on the property that eigenvectors with different eigenvalues are orthogonal, it follows that $\langle v_i |e|v_j\rangle = 0.$ So, the necessary condition is satisfied.
		
		When $\delta_{ij} = 1,$ the measurement condition is 
		\begin{equation}
			\langle v_i |e|v_i\rangle = \langle v_j |e|v_j\rangle = f(e).
		\end{equation}
		If $\Qu{\bar{e}} \neq \Qu{\mathbf{0}},$ then $f(e) = 0.$ If $\Qu{\bar{e}} = \Qu{\mathbf{0}},$ then $e \in C$ due to $d \leq d_m.$ Hence, $\exists~ g \in G$ such that $e = \gamma^{*}g.$ To satisfy $\langle v_i| e|v_i\rangle = \gamma\lambda_{\Qu{i}}(\bar{e}) = \langle v_j| e|v_j\rangle = \gamma\lambda_{\Qu{j}}(\bar{e}) = f(e),$ we need $\lambda_{\Qu{i}}(\bar{e}) = (-1)^{(\Qu{i},\bar{e})_s} = \lambda_{\Qu{j}}(\bar{e}) = (-1)^{(\Qu{j},\bar{e})_s}.$ This implies
		\begin{equation}\label{eq:measuring single e}
			\Qu{i} - \Qu{j} \in (\Qu{\bar{e}})^{\bot}_s,
		\end{equation}
		and since this relation must hold for every $\Qu{\bar{e}} \in C(d-1),$ we have
		\begin{equation}\label{eq:measuring Ed}
			\Qu{i} - \Qu{j} \in C(d-1)^{\bot}_s.
		\end{equation}
		Therefore, when $\Omega$ belongs to the same coset of $\frac{\mathbb{F}^{2n}_2}{C(d-1)^{\bot}_s},$ the measurement condition is satisfied.
	\end{proof}
	
	Define $ Q(\Omega) $ as the \emph{Quotient Space Quantum Code ($\mathbf{{QSQC}}$)}.
	
	From the proof, it is clear that formulas \eqref{eq:necessary} and \eqref{eq:measuring Ed} give the necessary and sufficient conditions for choosing invariant subspaces.
	
	\textit{Encoding circuits.}\textit{---} We can get the basis states of $Q(\Omega)$ by first designing $Q(\Qu{\mathbf{0}})$ using the circuit of additive codes and then obtaining $Q(\Qu{\bar{e}})$ by the operator gate $e$ ($\Qu{\bar{e}} \in \Omega$) action on $Q(\Qu{\mathbf{0}})$.
	
	\textit{Decoding Method.}\textit{---}
	By replacing the classical code part with QSC in  decoding methods  for  some non-additive codes   such as $ ((9,12,3)) $ \cite{yu2008nonadditive} and CWS code \cite{cross2008codeword} , it is believed that the decoding of QSQC can be realised.
	
	\textit{Unique Advantages.}\textit{---} ~If each state in $Q(\Qu{i})$ is considered as the same codeword $\Qu{i},$ then $Q(\Omega)$ can be viewed as a classical code and it's immune to $ G_s^{\bot} .$
	
	\textit{Special Cases .}\textit{---}Next, I discuss  some special cases of Theorem \ref{thm:QSQC}. It is easy to see that when $C(d-1)=\{\mathbf{0}\},$ $Q(\Omega)$ becomes a nondegenerate code. When $L=1,$ we obtain additive codes $((n,2^k,d)).$ When $k=0,$ $C=C_s^{\bot}$ and $(C_s^{\bot})\setminus C=\emptyset,$ therefore, we can remove the restriction $d\leq d_m$ in Theorem \ref{thm:QSQC}, resulting in CWS codes $((n,L,d)).$ Since $C(d-1)\subset C\subset C_s^{\bot}\subset C(d-1)_s^{\bot},$ let $s=\text{dim}(C(d-1))$ and $d_s=\min\{w_Q(c)\mid c\in(C(d-1)_s^{\bot})\setminus C(d-1)\}.$ It is known that the additive code containing $Q(\Omega) $ is  $[[n,n-s,d_s]].$ Let $B(d)=\{ x|w_H(x)\leq d\},$ due to $\overline{E}(d-1)\subset B(2d-2),$ if the distance in equation \eqref{eq:min-normal} is defined as $\|x\|=w_H(x),$ then the distance in Theorem \ref{thm:QSQC} should be adjusted to $2d-1.$

	\textit{New Bounds.}\textit{---} New codes have inspired new bounds. The requirement in Theorem \ref{thm:QSQC} that $\Omega$ belongs to $\mathbb{F}^{2n}_2/C(d-1)^{\bot}_s$ is to satisfy the measurement condition \eqref{eq:measuring Ed}. Due to the translational invariance property of distance, it suffices to find equivalent  QSC within $C(d-1)^{\bot}_s.$ Thus, in contrast to the classical method, in the sphere-packing method the full space to be filled is not $ V $ but $ C(d-1)^{ \bot }_s ,$  the volume of the codeword $ \Qu{i}\in C(d-1)^{ \bot }_s $ after the action of errors is not $ B(\Qu{i},t)=\{\Qu{j}\mid d(\Qu{j}, \Qu{i})\leq t\},$ but $MB(\Qu{i},t)=B(\Qu{i},t)\bigcap C(d-1)^{\bot}_s, $ and the quotient sphere centered at the origin is $\Qu{E}(t)=\{\Qu{\bar{e}}\mid \|\Qu{\bar{e}}\|\leq t\},$ which can be measured as $ME(t)=\Qu{E}(t)\bigcap C(d-1)^{\bot}_s.$
	By the definition of $B(\Qu{i},t),$ it is easy to see that $B(\Qu{i},t)=\Qu{i}+\ \Qu{E}(t).$ Let $\Qu{\bar{e}} \in \Qu{E}(t),$ since $C(d-1)^{\bot}_s$ is a linear space, we have $\Qu{i}+\Qu{\bar{e}}\in C(d-1)^{\bot}_s\Leftrightarrow \Qu{\bar{e}} \in C(d-1)^{\bot}_s\Leftrightarrow \Qu{\bar{e}} \in ME(t).$ Thus, $|C(d-1)^{\bot}_s|=2^{n-k-s}$ and $|MB(\Qu{i},t)|=|ME(t)|,$   where $ |A| $ is the number of different cosets in $\Qu{A}.$  Analogous to deriving bounds of classical code, I obtain two types of quantum bounds--measurement bounds.

	\begin{thm}[Hamming  type]$Q(\Omega)((n,2^kL,d))$ satisfies $2^{k}L|ME(\lfloor \frac{d-1}{2}\rfloor )|\leq 2^{n-s} . $ \label{HB}
	\end{thm}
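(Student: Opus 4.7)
My plan is to adapt the classical sphere-packing argument and run it inside the restricted ambient region $C(d-1)^{\bot}_s$ rather than the whole quotient $I$. The bulk of the bookkeeping has already been assembled in the preamble to the theorem: the measurable ball $MB(\Qu{i},t)=B(\Qu{i},t)\cap C(d-1)^{\bot}_s$ has size independent of $\Qu{i}$, equal to $|ME(t)|$, and the ambient measurable region contains exactly $2^{n-k-s}$ cosets of $C^{\bot}_s$. So the task reduces to arranging disjointness and comparing cardinalities.

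First I would invoke translational invariance to assume without loss of generality that $\Omega\subset C(d-1)^{\bot}_s$: Theorem \ref{thm:QSQC} forces $\Omega$ to lie in a single coset of $V/C(d-1)^{\bot}_s$, so translating by any representative of that coset embeds $\Omega$ into $C(d-1)^{\bot}_s$ without altering pairwise distances or ball sizes. Next, set $t=\lfloor(d-1)/2\rfloor$ and attach a measurable ball $MB(\Qu{i},t)$ to each codeword $\Qu{i}\in\Omega$. If two such balls shared a common coset $\Qu{z}$, the triangle inequality would give $d(\Qu{i},\Qu{j})\leq 2t\leq d-1$, contradicting $d(\Omega)=d$; hence the balls are pairwise disjoint and, by construction, all lie inside $C(d-1)^{\bot}_s$.

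Counting then yields
\begin{equation*}
L\cdot|ME(t)|=\sum_{\Qu{i}\in\Omega}|MB(\Qu{i},t)|\leq |C(d-1)^{\bot}_s|=2^{n-k-s},
\end{equation*}
and multiplying by the dimension $2^k$ of each invariant subspace $Q(\Qu{i})$ delivers the claimed $2^kL\,|ME(t)|\leq 2^{n-s}$.

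The step that deserves the most care, and which I would stress in the writeup, is not a computational obstacle but a conceptual one: the shift from working in $I$ to working in $C(d-1)^{\bot}_s$. Because a quantum code can be degenerate, errors lying in $C(d-1)$ act trivially on the codewords, so the effective ``alphabet'' for sphere-packing is the quotient $C(d-1)^{\bot}_s/C^{\bot}_s$ rather than all of $I$; this is precisely the source of the factor $2^{-s}$ relative to a naive Hamming count and is what makes the bound applicable to degenerate codes. Beyond this interpretive point, the proof is a direct transcription of the classical argument into the quotient-space language, and no further obstacle is anticipated.
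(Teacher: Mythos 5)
Your proof is correct and is exactly the argument the paper intends: the paper states this theorem without a written proof, but its preceding paragraph sets up precisely your ingredients (translation of $\Omega$ into $C(d-1)^{\bot}_s$, the identity $|MB(\Qu{i},t)|=|ME(t)|$, and $|C(d-1)^{\bot}_s|=2^{n-k-s}$ counted in cosets of $C^{\bot}_s$), leaving only the disjointness-by-triangle-inequality packing step you supply. No gaps.
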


	\begin{thm}[G-V type]
		If $2^kL|ME(d-1)| < 2^{n-s},$ then there exists a $Q(\Omega)((n,2^{k}(L+1),\geq d)).$ \label{GVB}
	\end{thm}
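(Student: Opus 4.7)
The plan is to imitate the classical Gilbert--Varshamov argument, but carry it out \emph{inside} the subspace $C(d-1)_s^{\bot}/C_s^{\bot}$ rather than all of $I$. First I would reduce to the case where the ambient coset in $\mathbb{F}_2^{2n}/C(d-1)_s^{\bot}$ that must contain $\Omega$ is the trivial one: by the translation invariance of the quotient distance we may shift any coset back to $C(d-1)_s^{\bot}$ without changing pairwise distances. The task then becomes: construct a subset $\Omega \subset C(d-1)_s^{\bot}/C_s^{\bot}$ of size $L+1$ whose pairwise distance in $I$ is at least $d$, since Theorem~\ref{thm:QSQC} will then deliver the desired quantum code automatically.

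Next I would use the packing counts provided in the paragraph preceding the statement. The ``arena'' has exactly $|C(d-1)_s^{\bot}| = 2^{n-k-s}$ cosets of $C_s^{\bot}$, and for each codeword $\Qu{i}\in\Omega$ the set of points that are forbidden as the next codeword is the restricted ball $MB(\Qu{i}, d-1) = (\Qu{i}+\Qu{E}(d-1))\cap C(d-1)_s^{\bot}$, of cardinality $|ME(d-1)|$. Note that the hypothesis $2^{k}L\,|ME(d-1)| < 2^{n-s}$ rewrites precisely as $L\,|ME(d-1)| < 2^{n-k-s} = |C(d-1)_s^{\bot}|$.

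The greedy step is then standard. Assume inductively that we have already placed $L$ codewords with pairwise distance $\geq d$ inside $C(d-1)_s^{\bot}$ (the base case $L=0$ is vacuous). Their combined ``shadow'' $\bigcup_{\Qu{i}\in\Omega} MB(\Qu{i}, d-1)$ has size at most $L\cdot|ME(d-1)|$, which by hypothesis is strictly smaller than the ambient $|C(d-1)_s^{\bot}|$. Hence some $\Qu{j}\in C(d-1)_s^{\bot}$ lies outside the shadow, and by construction $d(\Qu{j},\Qu{i})\geq d$ for every $\Qu{i}\in\Omega$. Enlarging $\Omega$ by $\Qu{j}$ produces a QSC $\Omega'$ of parameters $(n-k, L+1, \geq d)$ still contained in a single coset of $C(d-1)_s^{\bot}$, so Theorem~\ref{thm:QSQC} yields $Q(\Omega')((n, 2^k(L+1), \geq d))$.

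There is no real obstacle here; the only bookkeeping likely to trip one up is that every count must be interpreted \emph{modulo} $C_s^{\bot}$ (so $|ME(d-1)|$ and $|C(d-1)_s^{\bot}|$ are numbers of distinct cosets, not of vectors in $\mathbb{F}_2^{2n}$). Once this convention is respected, the argument is a direct transcription of the classical Gilbert--Varshamov extension argument into the metric quotient space $I$ restricted to $C(d-1)_s^{\bot}/C_s^{\bot}$.
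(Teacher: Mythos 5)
Your proof is correct and follows exactly the route the paper intends: the paper states this bound with no explicit proof, saying only that it is obtained ``analogous to deriving bounds of classical code'' from the coset counts $|C(d-1)_s^{\bot}|=2^{n-k-s}$ and $|MB(\Qu{i},t)|=|ME(t)|$ established in the preceding paragraph, and your greedy extension argument carried out inside $C(d-1)_s^{\bot}$ (with the hypothesis rewritten as $L\,|ME(d-1)|<2^{n-k-s}$) is precisely that analogy made explicit. The only caveats worth recording are the one you already flag (all cardinalities count cosets of $C_s^{\bot}$, not vectors) and the standing assumption $d\leq d_m$ on $C$ needed to invoke Theorem~\ref{thm:QSQC}, which the paper likewise leaves implicit.
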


	When $Q(\Omega)$ is nondegenerate for $d,$ i.e., $C(d-1)^{\bot}_s=V,$ the Hamming bound in Theorem \ref{HB} reduces to the General Hamming bound \cite{EAMC96}. If we can prove that $ \sum_{i=0}^{ \lfloor\frac{d-1}{2}\rfloor}3^{i} { n \choose i}\leq 2^{s}|ME(\lfloor \frac{d-1}{2}\rfloor )|,$ then the Hamming bound for nondegenerate codes can be applied to degenerate nonadditive codes. When $s=n-k,$ we can only construct additive codes.
	
	\textit{Representatives.} \textit{---}To  proof the quantum Singleton bound    \cite{knill1997theory,CC97}, we need to choose appropriate representatives for the cosets.
	
	For each subspace $H,$ there exists a subspace $U$ such that $V=H\oplus U.$ For any vector $v=h+u\in V,$ I define the projection mapping $\mathcal{P}$ as follows
	\begin{equation}
		\mathcal{P}: V \rightarrow U, \quad \mathcal{P}(v) = \mathcal{P}(h+u) = u.
	\end{equation}
	It is clear that $\ker(\mathcal{P})=H,$ and thus $W=V/H\cong U . $ Therefore, each coset $\Qu{x}$ in $W$ has a unique representative $\mathcal{P}x,$ which belongs to $U.$ Let $\vec{T}=(h_1,\dots,h_{n+k},u_1,\dots,u_{n-k})$ be a basis for $V,$ where $h$ is a basis for $H$ and $u$ is a unit basis for $U.$ The matrix  of $\mathcal{P}$ concerning this basis is given by
	\begin{equation}
		P = \mathcal{P}\vec{T} = \begin{pmatrix}
			\mathbf{0}_{n+k} & \mathbf{0} \\
			\mathbf{0} & \mathbf{1}_{n-k}
		\end{pmatrix}.
	\end{equation}
	Let $x=\vec{T}\tilde{x},$ then $Px=(\mathbf{0},\tilde{x}_{n+k+1},\dots,\tilde{x}_{2n})^T.$ Thus, the representatives only need to be represented by $n-k$ bits. To extract the last $n-k$ bits of $Px,$ we can still represent $P$ as $\begin{pmatrix} \mathbf{0} & \mathbf{1}_{n-k} \end{pmatrix},$ then  $Px=(\tilde{x}_{n+k+1},\dots,\tilde{x}_{2n})^T.$ Based on this, I define the \textit{quotient projection norm} as	$	\|\Qu{x}\|_p = w_H(Px)\label{eq:p-norm}.$ It's obvious that $\|\Qu{x}\|_p$ satisfies the norm axioms, and $\|\Qu{x}\|_p \geq w_H(\Qu{x}) = \inf_{\tilde{x}\in \Qu{x}} w_H(\tilde{x}).$ If we want to construct QSQC using this norm, adjustments are needed in Theorem \ref{thm:QSQC}. I now present the Singleton bound.	
	\begin{thm}[Singleton]
		If $Q(\Omega) [[n,k+l,d]]$ satisfies $n\equiv k \mod 2$ and $l\equiv 0 \mod 2,$ then $n\geq k+l+2d-2.$
	\end{thm}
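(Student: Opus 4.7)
The plan is to compose the projection from the \emph{Representatives} paragraph with the standard identification $\mathbb{F}_2^{2n}\cong \mathbb{F}_4^n$, so that the QSC becomes a classical $\mathbb{F}_4$-code on which the classical Singleton bound yields the desired inequality.

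First, I would fix the symplectic self-orthogonal $C\subseteq V=\mathbb{F}_2^{2n}$ of $\mathbb{F}_2$-dimension $n-k$ underlying $Q(\Omega)$, so that $\Omega$ is a QSC of $L=2^l$ codewords in $I=V/C_s^{\bot}$ with pairwise quantum quotient distance at least $d$. The parity hypothesis $n\equiv k\pmod 2$ makes $\dim U=n-k$ even. After (if necessary) applying a Clifford/symplectic equivalence on $V$ that preserves the parameters of $Q(\Omega)$, I pick the complement $U$ of $C_s^{\bot}$ whose basis is qubit-aligned and symplectically paired, so that $U$ is identified with $\mathbb{F}_4^{(n-k)/2}$ as an additive group and the quantum weight on $V$, restricted to $U$, agrees with the $\mathbb{F}_4$-Hamming weight.

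Second, apply the projection $\mathcal{P}:V\to U$ from the \emph{Representatives} paragraph (with $H=C_s^{\bot}$). For distinct codewords $\Qu{i},\Qu{j}\in\Omega$, $\mathcal{P}(i-j)$ is a representative of $\Qu{i}-\Qu{j}$, hence
\begin{equation*}
w_Q(\mathcal{P}(i-j))\ \ge\ \|\Qu{i}-\Qu{j}\|\ \ge\ d.
\end{equation*}
Under the identification $U\cong\mathbb{F}_4^{(n-k)/2}$, this equals the $\mathbb{F}_4$-Hamming distance between $\mathcal{P}\Qu{i}$ and $\mathcal{P}\Qu{j}$, so $\mathcal{P}$ is in particular injective on $\Omega$.

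Third, invoke the classical Singleton bound over $\mathbb{F}_4$ applied to $\mathcal{P}\Omega\subseteq \mathbb{F}_4^{(n-k)/2}$, a code with $2^l$ codewords of minimum distance at least $d$:
\begin{equation*}
2^l\ \le\ 4^{(n-k)/2-d+1}\ =\ 2^{n-k-2d+2}.
\end{equation*}
This rearranges to $n\ge k+l+2d-2$. The second parity hypothesis $l\equiv 0 \pmod 2$ matches the $\mathbb{F}_2$-linear structure of $\Omega$ with the ambient $\mathbb{F}_4$-structure on $\mathcal{P}\Omega$, making the identification compatible.

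The main obstacle is the first step, namely producing a qubit-aligned symplectically paired complement $U$ of $C_s^{\bot}$. This will rely on a standard-form theorem for stabilizer groups via the local-Clifford action on $V$, together with a verification that such an equivalence does not disturb the QSC structure of $\Omega$ or the parity hypotheses. The factor of $2$ in the bound arises precisely from this symplectic pairing (so that $\mathbb{F}_4$-Hamming, rather than $\mathbb{F}_2$-Hamming, governs the projected distance); without it, the argument only yields the weaker $n\ge k+l+d-1$.
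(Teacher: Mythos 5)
Your strategy coincides with the paper's own proof: compose the weight-preserving identification $\psi:\mathbb{F}_2^{2n}\to\mathbb{F}_4^{n}$ with the coset-representative projection $\mathcal{P}$ onto a complement $U$ of $C^{\bot}_s$, note that $\mathcal{P}(i-j)$ is a representative of $\Qu{i}-\Qu{j}$ so its weight dominates the quotient distance, and then apply the classical Singleton bound to the image code in $\mathbb{F}_4^{(n-k)/2}$. Those parts of your argument are fine. The entire weight of the proof, however, rests on the step you yourself flag as ``the main obstacle'': producing a complement $U$ that is qubit-aligned, i.e.\ supported on $(n-k)/2$ qubit positions, since only for such a $U$ does the coordinatization $U\cong\mathbb{F}_4^{(n-k)/2}$ preserve weights (for a general complement the coordinatization can strictly decrease Hamming weight, and the length-$(n-k)/2$ Singleton bound no longer applies). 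The paper disposes of this step with the bare assertion that a suitable $P=\mathcal{P}\circ\psi$ ``exists,'' so you have correctly located the only nontrivial point; but your proposed resolution does not close it.

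A qubit-aligned complement supported on a set $S$ of $(n-k)/2$ qubits exists precisely when the projection of $C$ onto the coordinates in $S$ is bijective. The equivalences that preserve quantum weight are local symplectic maps together with qubit permutations (local Cliffords), and these act on each qubit's coordinate pair by an invertible $2\times2$ matrix, hence preserve the $\mathbb{F}_2$-rank of the restriction of $C$ to each qubit; a non-local symplectic map would instead destroy the quantum weight and with it the inequality $w_Q(\mathcal{P}(i-j))\ge d$. Consequently, if every single-qubit restriction of $C$ has rank at most one, no Clifford standard form can create the required $S$. The obstruction is realized already by $C=C^{\bot}_s=\langle Z_1,Z_2\rangle$ with $n=2$, $k=0$: no qubit-aligned complement exists, and in fact no distance-compatible map into $\mathbb{F}_4^{1}$ can exist at all, because the coset of $X_1X_2$ has quotient norm $2$ while two points of $\mathbb{F}_4^{1}$ are at Hamming distance at most $1$. (In that example the measurement condition of Theorem~\ref{thm:QSQC} happens to force $|\Omega|=1$, so the bound itself is not threatened; but neither your argument nor the paper's shows that the theorem's hypotheses always exclude the obstructed stabilizers, and that is exactly what a complete proof must supply.) So the gap is genuine, it is shared by the paper's own one-line existence claim, and the standard-form theorem you invoke is not strong enough to fill it.
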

	\begin{proof}
		Since there exists a $\mathbb{F}_2$-linear isomorphism $\psi: \mathbb{F}_2^{2n}\rightarrow \mathbb{F}_4^n$ that preserves inner products \cite{CRSS98}, and $d_H(\psi x,\psi y)=d_Q(x,y),$ we know that there exists $P=\mathcal{P}\circ \psi: \frac{\mathbb{F}_2^{2n}}{C_s^\bot} \rightarrow \mathbb{F}_4^{\frac{n-k}{2}}$ such that $d_H(P\Qu{x},P\Qu{y})\geq d_Q(\Qu{x},\Qu{y}).$ Since $\Omega$ has parameters $[n-k,l,d],$ $P\Omega$ can be regarded as a classical code $[\frac{n-k}{2},\frac{l}{2},t]_4.$ Let $\Qu{i}, \Qu{j}\in \Omega,$ and let $t=d_H(P\Qu{i},P\Qu{j}).$ Since $d_H(P\Qu{i},P\Qu{j})\geq d_Q(\Qu{i},\Qu{j})\geq d, $we have $t\geq d.$ According to the Singleton bound for classical codes, $\frac{n-k}{2}\geq \frac{l}{2}+t-1.$ Hence, $n-k\geq l+2d-2.$
	\end{proof}

	\textit{Examples.}\textit{---}To enhance the  specificity of the theory in this paper, here are two examples. First, I present four constructions of $((8,8,3))$ codes.  We select the self-orthogonal code $C_8,$ also known as the Grassl's code tables  \cite{Grassl:codetables}
	$$C_8= \begin{pmatrix}
		10001011|00101101\\
		01001110|00111010\\
		00101101|01001110\\
		00010111|01011001\\
		00000000|11111111		
	\end{pmatrix}.$$
	By adding the base vectors
	$$\begin{pmatrix}
		\alpha_1\\
		\alpha_2\\
		\alpha_3\\	
	\end{pmatrix}=\begin{pmatrix}
		11101000|00000000\\
		01110100|00000000\\
		11010010|00000000
	\end{pmatrix},$$
	we obtain three additional self-orthogonal codes
	$$
	C_{81}= \begin{pmatrix}
		C_8\\
		\alpha_1		
	\end{pmatrix},\quad
	C_{82}=\begin{pmatrix}
		C_{81}\\
		\alpha_2
	\end{pmatrix},\quad
	C_{83}=\begin{pmatrix}
		C_{82}\\
		\alpha_3\\
	\end{pmatrix}.
	$$
	All four codes have minimum distance $d_m=3$ and are nondegenerate for $d=3.$ Next,  we found the following QSCs
	\begin{align*}
		\Omega_{8}(5,1,3)~ =&\{\Qu{\mathbf{0}}\},\\
		\Omega_{81}(6,2,3)=&\{\Qu{\mathbf{0}},\Qu{01100000|10010000}\},\\
		\Omega_{82}(7,4,3)=&\{
		\Qu{\mathbf{0}},
		\Qu{10100000|11000000},\\
		\Qu{01100000|10010000},~
		&\Qu{11000000|01010000}\},\\
		\Omega_{83}(8,8,3)=&\{\Qu{\mathbf{0}},
		\Qu{10100000|11000000},\\
		\Qu{10010000|10100000},~&
		\Qu{00110000|01100000},\\
		\Qu{01100000|10010000},~&\Qu{11000000|01010000},\\
		\Qu{11110000|00110000},~&
		\Qu{01010000|11110000}
		\}.
	\end{align*}
	By applying theorem \ref{thm:QSQC}, we obtain the following codes: $Q(\Omega_{8})((8,2^{3}\cdot1,3)),$ $Q(\Omega_{81})((8,2^{2}\cdot2,3)),$ $Q(\Omega_{82})((8,2^{1}\cdot4,3)),$ and $Q(\Omega_{83})((8,2^{0}\cdot8,3)).$
	
	The second one, I consider a degenerate CWS code $((12,2,5)).$ The self-dual code is given by
	$$C_{12}=\begin{pmatrix}
		100001000000|010100011000\\
		010001000000|010111001100\\
		001001000000|010010000110\\
		000101000000|011101100100\\
		000011000000|001100110110\\
		000000100010|001100111010\\
		000000010010|010111110110\\
		000000001010|000101010100\\
		000000000110|010100101000\\
		000000000001|000000000000\\
		000000000000|111111000000\\
		000000000000|000000111110\\
	\end{pmatrix}, $$
	with $C_{12}(4)=(000000000001|000000000000).$ We found 
	$$\Omega_{12}(12,2,5)=\{ \Qu{\mathbf{0}}, \Qu{000000111110|000000000000}\},$$
	it is known that $\Omega_{12}  \subset C_{12}(4)^{\bot}_s,$~so we obtain the CWS $Q(\Omega_{12})((12,2,5)).$

	\textit{Relation to USt codes.}\textit{---} USt codes considered the construction of codes with multiple invariant subspaces of the stabilizer $ G  $ as the same as this paper, but our approaches are fundamentally different.
	
	The QSQC framework first  compute $  C(d-1)^{\bot}_s $,~for given  distance d, and then selects  QSC  $\Omega$ of distance $ d $ among $C(d-1)^{\bot }_s\backslash C^{\bot }_s $~(or its translation).
	
	However, the USt method first selects cosets   $\Omega$  and then calculates the distance\begin{equation}
		d=\min\{d(v):v\in (\Omega-\Omega)\backslash (\Omega^{\bot}_s \cap C)\},
	\end{equation}
	where $ \Omega-\Omega =\{a-b\mid a,b\in \Omega  \}$ denotes the set of all differences of vectors (rather than cosets) in  $ \Omega   $ ~\cite{GR2013}. So the distance depends on the selection of cosets and every time a new set of cosets is selected, it is necessary to compute $ (\Omega^{\bot}_s \cap C) $ to recalculate d which makes this method complex. It does not use the distance of the coset but still relies on the distance of the vector elements, which limits the theory's ability.
	It is known in the framework of QSQC that this order can easily lead to a small or even zero distance in the case of degenerate codes i.e two  cosets do not among $C(d-1)^{\bot }_s\backslash C^{\bot }_s $~(or its translation), while in the case of nondegenerate codes, $ C(d-1)^{\bot }_s=V ,$   there is no need to worry about such a situation, it is only necessary to select cosets whose distance is $ d $.  Therefore, the USt construction is the same as QSQC of  nondegenerate codes, while it is difficult to deal with degenerate codes, which is also a problem for some non-additive codes such as CWS codes.
	
	Another difference is that the QSQC method uses the concept of coset error correction more essentially, so that different quotient norms, such as the projection norm, can be defined for coset to achieve various purposes of quantum error correction.
	
	Below, I give  degenerate   codes that  satisfy  a strict inequality in Eq. (10.6) in \cite{GR2013} .	
	First, let's select the self-orthogonal code $C_9$ in Grassl's code tables \cite{Grassl:codetables}
	$$C_9= \begin{pmatrix}
		100010000|000011110\\
		000101000|100001000\\
		010011010|000110000\\
		000101110|010110010\\
		001110100|000101110\\
		000000110|001000100\\
		000000001|000000000		
	\end{pmatrix},$$
	with 	$  d(C_9)=1,d_m=3,$ so it is degenerate  for $ d=2 $. Next,we get 	$C_9(1)=\{000000001|000000000\},
	$	and select a QSC 
	\begin{align*}
		\Omega_9(7,16,2)~ =&\{\Qu{000000000|000000000},\\
		\Qu{110000000|000000000},~&
		\Qu{101000000|000000000},\\
		\Qu{011000000|000000000},~&
		\Qu{100100000|000000000},\\
		\Qu{010100000|000000000},~&
		\Qu{001100000|000000000},\\
		\Qu{111100000|000000000},~&
		\Qu{100010000|000000000},\\
		\Qu{010010000|000000000},~&
		\Qu{001010000|000000000},\\
		\Qu{111010000|000000000},~&
		\Qu{000110000|000000000},\\
		\Qu{110110000|000000000},~&
		\Qu{101110000|000000000},\\ \Qu{011110000|000000000}~~&\}.
	\end{align*}
	
	It is known that $\Omega_{9}  \subset C_{9}(1)^{\bot}_s,$~so we get $Q(\Omega_{9})((9,2^2\cdot 16,2)),$~i.e USt code$ [[9, 6, 2]] $.
	There is a strict inequality 
	$$ d(\Omega_9) = 2 >  d({\Omega_9}_{*}) = 1,$$where   $ d({\Omega_9}_{*}) $ is the minimum distance of $ \Omega_9 $ as a classical union normalizer code.
	
	If we remove an arbitrary coset  in $\Omega_{9}$, we can  obtain a truly degenerate nonadditive code $ ((9,60,2)) $ that satisfies a strict inequality in Eq. (10.6) in  \cite{GR2013}.
	
	Indeed, since $|C(1)^{\bot}_s|=2^{6},|ME(1)|=25,$ and $2\cdot 25< 2^{6}$, there must be a degenerate USt code $((9,2^{2}\cdot 3,\geq 2))$ according to theorem (\ref{GVB}) without selecting specific QSC.
	
	Based on the same method we can obtain  degenerate USt code $[[7, 4, 2]],$ where
	$$ 	C_7=\begin{pmatrix}
		1011100|0000000\\
		0000000|1111110\\
		0100010|0000000\\
		0000001|0000000
	\end{pmatrix},C_7(1)=\{
0000001|0000000\}$$  and	
	$ \Omega_7(4,2,2)=
	\{\Qu{0000000|0000000},
	\Qu{0000000|1100000}\}$ with  $ d(\Omega_7)=d_m=2>d(C_7)=1. $ 
	
	\textit{Conclusion and Discussion.}\textit{---}In conclusion, I  established  quotient space codes $(n-k,L,d)$ to construct quotient space  quantum codes $((n,2^kL,d)),$ which includes additive codes and codeword stabilized codes as special cases and can transmit $ L $ classical codewords. I have used this new framework to derive new bounds that are applicable to both degenerate and nondegenerate codes and to show a simple proof of the Singleton bound.  However, I did not find a simple method to compute the distance of QSC   and it is also not clear whether there are strictly nonadditive codes that are both non-CWS codes. If we consider each invariant subspace as a stone, additive codes use the largest stone for tiling, while CWS uses the smallest stone for tiling, QSQC encompasses tiling schemes with stones of various sizes. As the volume of the stones decreases, the gaps between them become smaller, but the number of stones to be selected increases. Therefore, intuitively, CWS codes can achieve the optimal parameters but have the highest algorithmic complexity. On the other hand,	QSQCs are upgraded to additive codes  and can be  easily extended to asymmetric, non-binary, subsystem, entanglement-assisted  codes, and so on.

	\textit{Acknowledgement}.\textit{---}I would like to thank Shanzhen Chen for his academic guidance, which encouraged me to study QECCs, and Keqin Feng and Hao Chen for the book \cite{feng2010quantum}, whose proofs of additive codes inspired me to write this paper. It is a pleasure to acknowledge the
	feedback from Victor V. Albert and  Markus Grassl.

	\bibliographystyle{apsrev4-2}

\end{document}